\newtheorem {Theorem}                 {Theorem}         [section]
\newtheorem {theorem}      [Theorem]  {Theorem}
\newtheorem {myalgorithm}    [Theorem]  {Algorithm}
\newtheorem {lemma}        [Theorem]  {Lemma}
\journal{arXiv}
\begin{document}
	\begin{frontmatter}
		\title{Minimum $2$- edge strongly biconnected spanning directed subgraph problem}
		\author{Raed Jaberi}
		
		\begin{abstract} 
			 
	    Wu and Grumbach introduced the concept of strongly biconnected directed graphs. A directed graph $G=(V,E)$ is called strongly biconnected if the directed graph $G$ is strongly connected and the underlying undirected graph of $G$ is biconnected. A strongly biconnected directed graph $G=(V,E)$ is said to be $2$-edge-strongly biconnected if it has at least three vertices and the directed subgraph $(V,E\setminus\left\lbrace e\right\rbrace )$ is strongly biconnected for all $e \in E$.  Let $G=(V,E)$ be a $2$-edge-strongly biconnected directed graph. In this paper we study the problem of computing a minimum size subset $H \subseteq E$ such that the directed subgraph $(V,H)$ is $2$-edge-strongly biconnected.
		\end{abstract} 
		\begin{keyword}
			Directed graphs \sep Connectivity \sep Approximation algorithms  \sep Graph algorithms \sep strongly connected graphs 
		\end{keyword}
	\end{frontmatter}
	\section{Introduction}
	 Wu and Grumbach \cite{WG2010} introduced the concept of strongly biconnected directed graphs, A directed graph $G=(V,E)$ is called strongly biconnected if the directed graph $G$ is strongly connected and the underlying undirected graph of $G$ is biconnected. An edge $e$ in a strongly biconnected directed graph $G=(V,E)$ is b-bridge if the subgraph $(V,E \setminus\left\lbrace e \right\rbrace )$ is not strongly biconnected. 
	 A strongly biconnected directed graph $G=(V,E)$ is called $k$-edge-strongly biconnected if for each $L\subset E$ with $|L|<k$, the  subgraph $(V,E\setminus L)$ is strongly biconnected. A strongly biconnected directed graph $G=(V,E)$ is therefore $2$-edge-strongly biconnected if and only if it has no b-bridges. 		
	Given a $k$-edge-strongly biconnected directed graph $G=(V,E)$, the minimum $k$-edge-strongly biconnected spanning subgraph problem (denoted by MKESBSS) is to compute a minimum size subset $E_{ke} \subseteq E$ such that the directed subgraph $(V,E_{ke})$ is $k$-edge-strongly biconnected. In this paper we study the MKESBSS problem for $k=2$.
 Note that optimal solutions for minimum $2$-edge-connected spanning subgraph  problem are not necessarily feasible solutions for the $2$-edge strongly biconnnected spanning subgraph problem, as shown in Figure \ref{figure:exampleoptimalsol}.

\begin{figure}[htp]
	\centering
	
	\subfigure[]{
	\scalebox{0.79}{
	\begin{tikzpicture}[xscale=2]
	\tikzstyle{every node}=[color=black,draw,circle,minimum size=0.9cm]
	\node (v1) at (-1.6,3.25) {$1$};
	\node (v2) at (-2.5,0) {$2$};
	\node (v3) at (-0.51, -2.5) {$3$};
	\node (v4) at (3.9,-1.5) {$4$};
	\node (v5) at (0.5,2.6) {$5$};
	\node (v6) at (5,1.7) {$6$};
	\node (v7) at (3.7,3.2) {$7$};
	\node (v8) at (4.9,0) {$8$};
	\node (v9) at (-3.43,1) {$9$};
	\node (v10) at (-3.43,3) {$10$};
	\node (v11) at (1,-2.1){$11$};
	\node (v12) at (2.6,-2.1) {$12$};
	
	\begin{scope}   
	\tikzstyle{every node}=[auto=right]   
	
	\draw [-triangle 45] (v9) to (v10);
	\draw [-triangle 45] (v10) to[bend right] (v9);
	\draw [-triangle 45] (v9) to[bend right] (v2);
	\draw [-triangle 45] (v2) to (v9);
	\draw [-triangle 45] (v9) to (v10);
	\draw [-triangle 45] (v10) to[bend right] (v9);
	\draw [-triangle 45] (v1) to [bend right](v10);
	\draw [-triangle 45] (v10) to (v1);
	\draw [-triangle 45] (v1) to (v5);
	\draw [-triangle 45] (v5) to[bend right] (v1);
	\draw [-triangle 45] (v5) to (v7);
	\draw [-triangle 45] (v7) to[bend right] (v5);
	\draw [-triangle 45] (v7) to (v6);
	\draw [-triangle 45] (v6) to[bend right] (v7);
	\draw [-triangle 45] (v8) to (v4);
	\draw [-triangle 45] (v6) to[bend right] (v8);
	\draw [-triangle 45] (v8) to (v6);
	\draw [-triangle 45] (v4) to[bend right] (v8);
	\draw [-triangle 45] (v5) to[bend right] (v3);
	\draw [-triangle 45] (v3) to (v5);
	\draw [-triangle 45] (v3) to (v2);
	\draw [-triangle 45] (v2) to[bend right] (v3);
	\draw [-triangle 45] (v12) to[bend right]  (v4);
	\draw [-triangle 45] (v4) to(v12);
	\draw [-triangle 45] (v12) to (v11);
	\draw [-triangle 45] (v11) to[bend right] (v12);
	\draw [-triangle 45] (v11) to [bend right](v5);
	\draw [-triangle 45] (v5) to (v11);
	\draw [-triangle 45] (v2) to (v10);

		\draw [-triangle 45] (v6) to[bend left] (v10);
	
		\draw [-triangle 45] (v2) to (v4);
	\end{scope}
	\end{tikzpicture}}
	
	}
	\subfigure[]{
	\scalebox{0.79}{
	\begin{tikzpicture}[xscale=2]
	\tikzstyle{every node}=[color=black,draw,circle,minimum size=0.9cm]
	\node (v1) at (-1.6,3.25) {$1$};
	\node (v2) at (-2.5,0) {$2$};
	\node (v3) at (-0.51, -2.5) {$3$};
	\node (v4) at (3.9,-1.5) {$4$};
	\node (v5) at (0.5,2.6) {$5$};
	\node (v6) at (5,1.7) {$6$};
	\node (v7) at (3.7,3.2) {$7$};
	\node (v8) at (4.9,0) {$8$};
	\node (v9) at (-3.43,1) {$9$};
	\node (v10) at (-3.43,3) {$10$};
	\node (v11) at (1,-2.1){$11$};
	\node (v12) at (2.6,-2.1) {$12$};
	
	\begin{scope}   
	\tikzstyle{every node}=[auto=right]   
	
	\draw [-triangle 45] (v9) to (v10);
	\draw [-triangle 45] (v10) to[bend right] (v9);
	\draw [-triangle 45] (v9) to[bend right] (v2);
	\draw [-triangle 45] (v2) to (v9);
	\draw [-triangle 45] (v9) to (v10);
	\draw [-triangle 45] (v10) to[bend right] (v9);
	\draw [-triangle 45] (v1) to [bend right](v10);
	\draw [-triangle 45] (v10) to (v1);
	\draw [-triangle 45] (v1) to (v5);
	\draw [-triangle 45] (v5) to[bend right] (v1);
	\draw [-triangle 45] (v5) to (v7);
	\draw [-triangle 45] (v7) to[bend right] (v5);
	\draw [-triangle 45] (v7) to (v6);
	\draw [-triangle 45] (v6) to[bend right] (v7);
	\draw [-triangle 45] (v8) to (v4);
	\draw [-triangle 45] (v6) to[bend right] (v8);
	\draw [-triangle 45] (v8) to (v6);
	\draw [-triangle 45] (v4) to[bend right] (v8);
	\draw [-triangle 45] (v5) to[bend right] (v3);
	\draw [-triangle 45] (v3) to (v5);
	\draw [-triangle 45] (v3) to (v2);
	\draw [-triangle 45] (v2) to[bend right] (v3);
	\draw [-triangle 45] (v12) to[bend right]  (v4);
	\draw [-triangle 45] (v4) to(v12);
	\draw [-triangle 45] (v12) to (v11);
	\draw [-triangle 45] (v11) to[bend right] (v12);
	\draw [-triangle 45] (v11) to [bend right](v5);
	\draw [-triangle 45] (v5) to (v11);

	\end{scope}
		\end{tikzpicture}}}
\subfigure[]{
	\scalebox{0.79}{
	\begin{tikzpicture}[xscale=2]
	\tikzstyle{every node}=[color=black,draw,circle,minimum size=0.9cm]
	\node (v1) at (-1.6,3.25) {$1$};
	\node (v2) at (-2.5,0) {$2$};
	\node (v3) at (-0.51, -2.5) {$3$};
	\node (v4) at (3.9,-1.5) {$4$};
	\node (v5) at (0.5,2.6) {$5$};
	\node (v6) at (5,1.7) {$6$};
	\node (v7) at (3.7,3.2) {$7$};
	\node (v8) at (4.9,0) {$8$};
	\node (v9) at (-3.43,1) {$9$};
	\node (v10) at (-3.43,3) {$10$};
	\node (v11) at (1,-2.1){$11$};
	\node (v12) at (2.6,-2.1) {$12$};
	
	\begin{scope}   
	\tikzstyle{every node}=[auto=right]   
	
	\draw [-triangle 45] (v9) to (v10);
	\draw [-triangle 45] (v10) to[bend right] (v9);
	\draw [-triangle 45] (v9) to[bend right] (v2);
	\draw [-triangle 45] (v2) to (v9);
	\draw [-triangle 45] (v9) to (v10);
	\draw [-triangle 45] (v10) to[bend right] (v9);
	\draw [-triangle 45] (v1) to [bend right](v10);
	\draw [-triangle 45] (v10) to (v1);
	\draw [-triangle 45] (v1) to (v5);
	\draw [-triangle 45] (v5) to[bend right] (v1);
	\draw [-triangle 45] (v5) to (v7);
	\draw [-triangle 45] (v7) to[bend right] (v5);
	\draw [-triangle 45] (v7) to (v6);
	\draw [-triangle 45] (v6) to[bend right] (v7);
	\draw [-triangle 45] (v8) to (v4);
	\draw [-triangle 45] (v6) to[bend right] (v8);
	\draw [-triangle 45] (v8) to (v6);
	\draw [-triangle 45] (v4) to[bend right] (v8);
	\draw [-triangle 45] (v5) to[bend right] (v3);
	\draw [-triangle 45] (v3) to (v5);
	\draw [-triangle 45] (v3) to (v2);
	\draw [-triangle 45] (v2) to[bend right] (v3);
	\draw [-triangle 45] (v12) to[bend right]  (v4);
	\draw [-triangle 45] (v4) to(v12);
	\draw [-triangle 45] (v12) to (v11);
	\draw [-triangle 45] (v11) to[bend right] (v12);
	\draw [-triangle 45] (v11) to [bend right](v5);
	\draw [-triangle 45] (v5) to (v11);

		\draw [-triangle 45] (v6) to[bend left] (v10);
	
		\draw [-triangle 45] (v2) to (v4);
	\end{scope}
	\end{tikzpicture}}}
\caption{(a) A $2$-edge strongly biconnected directed graph. (b) This subgraph is not $2$-edge strongly biconnected. But it is an optimal solution for the minimum $2$-edge-connected spanning subgraph problem. (c) An optimal solution for the minimum $2$-edge strongly biconnected spanning subgraph problem}
\label{figure:exampleoptimalsol}
\end{figure}
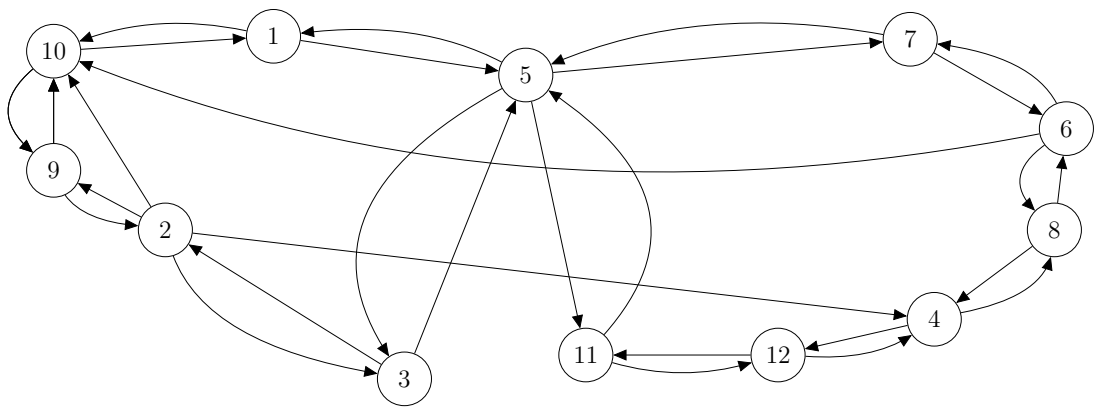
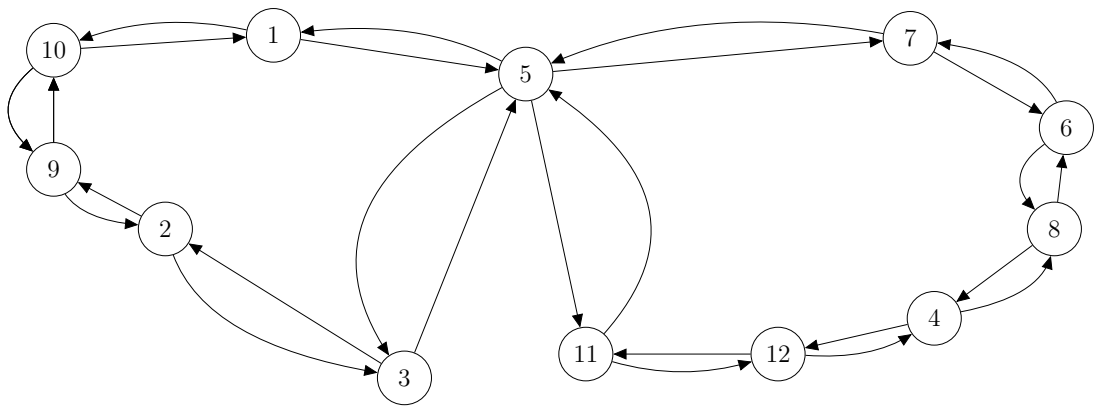
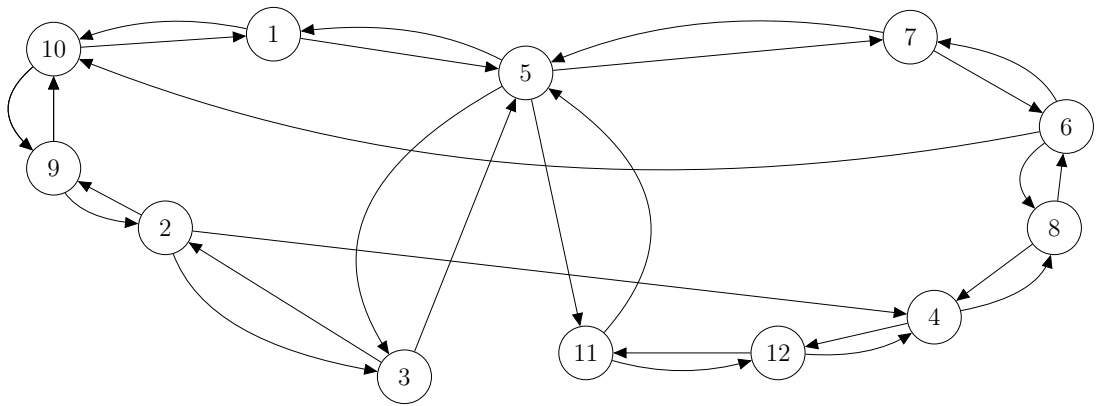

Let $G=(V,E)$ be a $k$-edge-connected directed graph. The problem of calculating a minimum size $k$-edge-connected spanning subgraph of $G$ is NP-hard for $k\geq 1$ \cite{G79,CT00}. Clearly, the MKESBSS problem is NP-hard for $k\geq 1$. Results of Edmonds \cite{Edmonds72} and Mader \cite{Mader85} imply that the number of edges in each minimal $k$-edge-connected directed graph is at most $2kn$ \cite{CT00}. Cheriyan and Thurimella \cite{CT00} gave a $(1+4/ \sqrt{k})$-approximation algorithm for the minimum $k$-edge-connected spanning subgraph problem. 
Moreover, Cheriyan and Thurimella \cite{CT00} provided a $(1+1/k)$-approximation algorithm for the minimum $k$-vertex-connected spanning subgraph problem. Georgiadis \cite{Georgiadis11} improved the running time of this algorithm for the minimum $2$-vertex-connected spanning subgraph (M2VCSS) problem and presented an efficient linear time approximation algorithm for the M2VCSS problem. Georgiadis et al. \cite{GIK20} presented efficient approximation algorithms based on the results of \cite{GT16,G10,ILS12,FILOS12} for the M2VCSS problem. Strongly connected components of a directed graph and blocks of an undirected graphs can be calculated in $O(n+m)$ time using Tarjan's algorithm \cite{TAARJAN72,Schmidt2013}.
Wu and Grumbach \cite{WG2010} introduced the concept of strongly biconnected directed graph and strongly biconnected components.  Italiano et al. \cite{ILS12,Italiano2010} provided linear time algorithms for computing all the strong articulation points and strong bridges of a directed graph. Georgiadis et al. \cite{GIK20} gave efficient approximation algorithms based on the results of \cite{GT16,G10,ILS12,FILOS12,FGILS2016} for the minimum $2$-vertex connected spanning subgraph problem. Jaberi \cite{Jaberi2019,Jaberi21,Jaberi2021} studied twinless articulation points and some related problems. Moreover, He studied b-bridges and some related problems \cite{Jaberi01897,Jaberi09793}. Georgiadis and Kosinas \cite{GeorgiadisandKosinas20} proved that twinless articulation points and twinless bridges can be found in $O(n+m)$ time. Jaberi studied minimum $2$-vertex strongly biconnected spanning directed subgraph problem in \cite{Jaberi47443} and
minimum $2$-vertex-twinless connected spanning subgraph problem in \cite{Jaberi03788}. 
 In this paper we study the minimum $2$-edge strongly biconnected spanning subgraph problem (denoted by M2ESBSS).

\section{Approximation algorithm for the Minimum $2$- edge strongly biconnected spanning directed subgraph problem} 

In this section we present an approximation algorithm  (Algorithm \ref{algo:approximationalgorithmfor2ebc}) for the M2ESBSS Problem.
\begin{lemma} \label{def:addingedgesfewersbcs}
Let $G=(V,E)$ be a strongly biconnected directed graph and let $U\subseteq E$ such that the directed subgraph $G_1=(V,U)$ is strongly connected but the underlying graph of $G_1$ is not biconnected. Let $(v,w)$ be an edge in $E\setminus U$  such that $v,w $ are not in the same strongly biconnected component of $G_1$. Then the directed subgraph $(V,U\cup \left\lbrace (v,w) \right\rbrace )$ has fewer  strongly biconnected components than $G_1$.
\end{lemma}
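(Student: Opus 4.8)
\emph{Proof idea.}
The plan is to reduce the statement to the behaviour of the block structure of the underlying undirected graph under insertion of a single edge. The first ingredient is the standard structure theorem for strongly biconnected components: for a \emph{strongly connected} digraph the strongly biconnected components are exactly the blocks (maximal biconnected subgraphs, or bridge edges --- which here, since $G_1$ is strongly connected, are spanned by a pair of anti-parallel edges) of the underlying undirected graph $\mathrm{und}(G_1)$. The only part of this that needs an argument is that each block $B$ of $\mathrm{und}(G_1)$ induces a strongly connected subdigraph of $G_1$: given $x,y\in B$, take any directed walk from $x$ to $y$ in $G_1$ and repeatedly splice out every maximal portion that leaves $B$; since in a connected graph a walk that leaves a block $B$ and later returns to $B$ must re-enter $B$ at the same cut vertex at which it left, each such splicing again yields a directed $x$-to-$y$ walk, and after finitely many steps the walk lies entirely in $B$. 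Hence the strongly biconnected components of $G_1$ are precisely the blocks of $\mathrm{und}(G_1)$; and since $G_2:=(V,U\cup\{(v,w)\})$ is still strongly connected (we only added an edge), the same description applies to $G_2$.

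Next I would note that the inserted edge is genuinely new at the undirected level: because $v$ and $w$ lie in different strongly biconnected components of $G_1$, no block of $\mathrm{und}(G_1)$ contains both of them, so $(w,v)\notin U$ (otherwise $vw$ would be an edge of $\mathrm{und}(G_1)$, putting $v$ and $w$ in a common block), and $(v,w)\notin U$ by hypothesis; thus $\mathrm{und}(G_2)$ arises from $\mathrm{und}(G_1)$ by adding the new undirected edge $vw$ between two vertices lying in distinct blocks. Now I would invoke the classical effect of such an insertion on the block--cut tree: adding an edge $vw$ to a connected graph $H$ with $v$ and $w$ in different blocks merges exactly the blocks lying on the $v$--$w$ path of the block--cut tree of $H$ --- say $k$ of them, and $k\ge 2$ precisely because $v$ and $w$ share no block --- into one new block, while every other block of $H$ survives unchanged; hence $H+vw$ has exactly $k-1\ge 1$ fewer blocks than $H$. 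Taking $H=\mathrm{und}(G_1)$ and combining with the first paragraph (applied to both $G_1$ and $G_2$) shows that $G_2$ has strictly fewer strongly biconnected components than $G_1$, which is the assertion.

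The two places carrying real content are the standard ingredients used above: that a block induces a strongly connected subdigraph, and the exact count of blocks after a single edge insertion. The latter in particular needs a little care with cut vertices --- a vertex may lie in several blocks, and a cut vertex on the merged $v$--$w$ path need not stop being a cut vertex --- so I expect the cleanest route is to argue purely in terms of the block--cut tree, counting block-nodes along the $v$--$w$ path. If one prefers not to quote the count, an equivalent finish is to check that the map sending each strongly biconnected component $C$ of $G_1$ to the unique strongly biconnected component of $G_2$ containing it (well defined because $C$ cannot contain both $v$ and $w$, so $G_2[C]=G_1[C]$, and $|C|\ge 2$) is surjective but sends a block of $G_1$ through $v$ and a block of $G_1$ through $w$ --- which are distinct --- to the same component of $G_2$, hence is not injective.
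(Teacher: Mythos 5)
Your proposal is correct, and it is in fact more complete than the paper's own argument, though it travels through the same underlying facts. The paper's proof is a two-line observation: since $G_1$ is strongly connected there is a simple path from $w$ to $v$, which together with the new edge $(v,w)$ forms a simple cycle, so $v$ and $w$ lie in a common biconnected component of the underlying graph of $(V,U\cup\{(v,w)\})$ and hence in a common strongly biconnected component of it; the strict decrease in the number of components is left implicit (it silently uses that adding an edge can only merge, never split, strongly biconnected components, and that every old component survives inside some new one). You make exactly these implicit steps explicit: you first identify the strongly biconnected components of a strongly connected digraph with the blocks of its underlying graph --- this is the Wu--Grumbach theorem that each block induces a strongly connected subdigraph, which you re-prove by the splicing argument but could simply cite --- and then you apply the classical block--cut-tree analysis of inserting an edge between vertices in distinct blocks, which gives the exact count ($k-1\geq 1$ fewer blocks) rather than just strict decrease. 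Your alternative finish via the surjective, non-injective map is also sound, with the one point of care you already flag: since $v$ (or $w$) may be a cut vertex lying in several blocks, the two preimage blocks exhibiting non-injectivity should be chosen on the $v$--$w$ path of the block--cut tree (equivalently, blocks meeting the cycle formed by the new edge and an undirected $v$--$w$ path), so that both are absorbed into the block of the enlarged graph containing the new edge. In short: the paper buys brevity by exhibiting a single merging cycle and leaving the counting to the reader; your version buys rigor and an exact accounting at the cost of re-deriving a citable structural fact.
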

\begin{proof}
There is a simple path $p$ from $w$ to $v$ in $g_1$ since $G_1$ is strongly connected. Moreover, the edge $(v,w)$ and the path $p$ form a simple cycle in the subgraph $(V,U\cup \left\lbrace (u,w) \right\rbrace )$. Since the underlying graph of the subgraph $(V,U\cup \left\lbrace (u,w) \right\rbrace )$ contains a biconnected component that contains $v$ and $w$, the vertices $v$ and $w$ are in the same strongly biconnected component of the subgraph $(V,U\cup \left\lbrace (u,w) \right\rbrace )$. 
\end{proof}

\begin{figure}[htbp]
	\begin{myalgorithm}\label{algo:approximationalgorithmfor2ebc}\rm\quad\\[-5ex]
		\begin{tabbing}
			\quad\quad\=\quad\=\quad\=\quad\=\quad\=\quad\=\quad\=\quad\=\quad\=\kill
			\textbf{Input:} A $2$-edge strongly biconnected directed graph $G=(V,E)$ \\
			
			\textbf{Output:} a $2$-edge strongly biconnected subgraph of $G$\\
			{\small 1}\> identify a subset $U\subseteq E$ such that $H=(V,U)$ is a minimal\\
			{\small 2}\> $2$-edge-connected subgraph of $G$.\\
			{\small 3}\> \textbf{if} the subgraph $H=(V,U)$ is $2$-edge strongly biconnected \textbf{then} \\
			{\small 4}\>\>output  $H=(V,U)$\\
			{\small 5}\> \textbf{else}\\
			{\small 6}\>\>$E_{2e} \leftarrow U$\\
			{\small 7}\>\>  $G_{2e}\leftarrow (V,E_{2e})$ \\
			{\small 8}\>\> \textbf{while} the underlying graph of $G_{2e}$ is not biconnected \textbf{do}\\
            {\small 8}\>\>\>  compute the  strongly biconnected components of $G_{2e} $\\
		    {\small 11}\>\>\> find an edge $(v,w) \in E\setminus E_{2e}$ such that $v,w $ are not in \\
		    {\small 12}\>\>\>\>the same strongly biconnected components of $G_{2e} $.\\
		    {\small 13}\>\>\>\> add $(v,w)$ to $E_{2e}$. \\						
			{\small 14}\>\>  compute the b-bridges of $G_{2e}= (V,E_{2e})$.\\
			{\small 15}\>\>  \textbf{for} each b-bridge $t\in V$ \textbf{do} \\
			{\small 16}\>\>\>  \textbf{while} the underlying graph of $G_{2e}\setminus\left\lbrace t \right\rbrace $ is not   biconnected \textbf{do}\\ 
		{\small 17}\>\>\>\>  compute the  strongly biconnected components of $G_{2s}\setminus\left\lbrace t \right\rbrace $\\
		{\small 18}\>\>\>\> find an edge $(u,w) \in E\setminus E_{2e}$ such that $u,w $ are not in \\
		{\small 19}\>\>\>\>the same strongly biconnected components of $G_{2e}\setminus\left\lbrace t \right\rbrace $.\\
		{\small 20}\>\>\>\> add $(u,w)$ to $E_{2e}$. \\
		{\small 21}\>\>\>output $G_{2e}$
	
		\end{tabbing}
	\end{myalgorithm}
\end{figure}

\begin{lemma} 
The output of Algorithm \ref{algo:approximationalgorithmfor2ebc} is $2$-edge strongly biconnected.
\end{lemma}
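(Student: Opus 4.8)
The goal is to show that when Algorithm~\ref{algo:approximationalgorithmfor2ebc} terminates, its output $G_{2e}=(V,E_{2e})$ is $2$-edge strongly biconnected, i.e. (a) $G_{2e}$ is strongly connected, (b) the underlying graph of $G_{2e}$ is biconnected, and (c) $G_{2e}$ has no b-bridges. The plan is to verify these three properties by tracking what each phase of the algorithm guarantees and, crucially, that later phases never destroy what earlier phases established, since the algorithm only ever \emph{adds} edges.

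First I would dispose of the trivial branch: if the minimal $2$-edge-connected subgraph $H=(V,U)$ computed in lines~1--2 is already $2$-edge strongly biconnected, the algorithm outputs it in line~4 and there is nothing to prove. So assume we enter the \textbf{else} branch. Here $E_{2e}$ is initialized to $U$, and since $H$ is $2$-edge-connected it is in particular strongly connected; because we only add edges, $G_{2e}$ stays strongly connected throughout the rest of the execution. That handles (a) once and for all. Next, the first \textbf{while} loop (lines~8--13) terminates only when the underlying graph of $G_{2e}$ is biconnected; I would invoke Lemma~\ref{def:addingedgesfewersbcs} to argue each iteration is well-defined (such an edge $(v,w)\in E\setminus E_{2e}$ exists because $G=(V,E)$ itself is $2$-edge strongly biconnected, hence strongly biconnected, so its underlying graph is biconnected and therefore there must be an edge of $E$ joining vertices in distinct strongly biconnected components of the current $G_{2e}$) and that each iteration strictly decreases the number of strongly biconnected components, so the loop terminates with (b) holding. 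Again, adding edges in the later phase can only keep the underlying graph biconnected, so (b) persists.

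The main work is property (c). After the first \textbf{while} loop, line~14 computes the b-bridges of $G_{2e}$, and the \textbf{for} loop (lines~15--20) processes each such b-bridge $t$: the inner \textbf{while} loop (lines~16--20) keeps adding edges of $E\setminus E_{2e}$ until the underlying graph of $G_{2e}\setminus\{t\}$ is biconnected. The key claim to establish is that after this loop has processed \emph{all} b-bridges, the resulting graph has no b-bridges at all. For a fixed $t$, the inner loop terminates (same Lemma~\ref{def:addingedgesfewersbcs} argument, now applied to $G_{2e}\setminus\{t\}$, using that $G$ being $2$-edge strongly biconnected implies $G\setminus\{t\}$ is strongly biconnected so the required edge exists); on termination, $G_{2e}\setminus\{t\}$ is strongly connected with biconnected underlying graph, i.e. strongly biconnected, which is exactly the statement that $t$ is no longer a b-bridge. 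The subtle point --- and the step I expect to be the main obstacle --- is \emph{monotonicity of "$t$ is not a b-bridge"}: I must argue that once an edge $t$ has been fixed (no longer a b-bridge), processing a subsequent b-bridge $t'$ by adding more edges cannot turn $t$ back into a b-bridge. This follows because strong biconnectivity of $G_{2e}\setminus\{t\}$ is preserved under edge additions to $E_{2e}$ (adding an edge $e\neq t$ keeps $G_{2e}\setminus\{t\}$ strongly connected and keeps its underlying graph biconnected; if the added edge equals $t$ itself that cannot happen since $t\in E_{2e}$ already). Hence after the \textbf{for} loop, for every edge $e$ that was a b-bridge of the post-first-loop graph, $G_{2e}\setminus\{e\}$ is strongly biconnected; and every edge that was \emph{not} a b-bridge then is still not one, by the same monotonicity. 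Therefore the final $G_{2e}$ has no b-bridges, which together with (a) and (b) gives that the output is $2$-edge strongly biconnected. \qed
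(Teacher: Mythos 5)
Your proof is correct and follows essentially the same route as the paper, whose own proof is just a one-line appeal to Lemma~\ref{def:addingedgesfewersbcs}; you have filled in exactly the details that appeal presupposes (termination and well-definedness of the two loops via that lemma, plus monotonicity of strong connectivity, biconnectivity of the underlying graph, and of ``$t$ is not a b-bridge'' under edge additions). One small hole to patch in your final case analysis: edges added \emph{during} the for loop are neither ``b-bridges then'' nor ``non-b-bridges then,'' so they are not covered by your two cases; they are handled in one sentence by noting that the final graph minus such an edge still contains the strongly biconnected post-first-loop graph as a spanning subgraph on the same vertex set, hence is strongly biconnected, so no newly added edge can be a b-bridge.
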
	
\begin{proof}
It follows from Lemma \ref{def:addingedgesfewersbcs}.
\end{proof}

The following lemma shows that any $2$-edge-strongly biconnected directed subgraph of a $2$-edge-strongly biconnected directed graph has at least $2n$ edges.
\begin{lemma} \label{def:optsolution2esb}
	Let $G=(V,E)$ be a $2$-edge-strongly biconnected directed graph. Let $U\subseteq E$ be an optimal solution for the M2ESBSS problem. Then the subgraph $(V,U)$ has at least $2n$ edges. 
	
\end{lemma}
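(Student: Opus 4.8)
The plan is to prove the stronger statement that \emph{every} feasible solution to the M2ESBSS problem, not just an optimal one, has at least $2n$ edges, by a simple degree‑counting argument. The starting point is the observation that if $(V,U)$ is $2$-edge-strongly biconnected, then in particular it is strongly connected and, for every edge $e\in U$, the subgraph $(V,U\setminus\left\lbrace e \right\rbrace)$ is strongly biconnected and hence strongly connected. Equivalently, $(V,U)$ has no strong bridge.

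First I would show that in such a subgraph every vertex has out‑degree at least $2$. Since $(V,U)$ is strongly connected and $n=|V|\geq 3$, every vertex $v$ has at least one outgoing edge in $U$. Suppose, for contradiction, that some vertex $v$ had exactly one outgoing edge $(v,w)$ in $U$. Then in the subgraph $(V,U\setminus\left\lbrace (v,w)\right\rbrace)$ the vertex $v$ has no outgoing edge at all, so no vertex other than $v$ is reachable from $v$, contradicting the strong connectivity of $(V,U\setminus\left\lbrace (v,w)\right\rbrace)$ guaranteed by $2$-edge-strong biconnectivity. Hence $\mathrm{outdeg}_U(v)\geq 2$ for every $v\in V$ (a symmetric argument gives $\mathrm{indeg}_U(v)\geq 2$, though one direction is enough).

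Summing over all vertices then yields $|U|=\sum_{v\in V}\mathrm{outdeg}_U(v)\geq 2n$, which is the desired bound; applying it to an optimal solution $U$ completes the proof. I do not anticipate any genuine obstacle here: the only points needing a word of care are that a vertex cannot have out‑degree $0$ (ruled out by strong connectivity together with $n\geq 2$) and cannot have out‑degree $1$ (ruled out by the absence of strong bridges, which is part of the definition of $2$-edge-strong biconnectivity).
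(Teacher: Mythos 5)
Your proof is correct and follows essentially the same route as the paper: both arguments reduce to the observation that a $2$-edge-strongly biconnected subgraph has no strong bridges (hence is $2$-edge-connected), from which the $2n$ bound follows; you merely spell out the degree-counting step that the paper leaves implicit.
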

\begin{proof}
the subgraph $(V,U)$ has no strong bridges since it has no b-bridges. Therefore, the subgraph $(V,U)$ is $2$-edge connected.
\end{proof}

Let $i$ be the number of b-bridges in  $H$. The following lemma shows that Algorithm \ref{algo:approximationalgorithmfor2ebc} has an approximation factor of $((5+i)/2)$.
\begin{theorem}
	Let $i$ be the number of b-bridges in $H$. Then, $|E_{2e}|\leq i(n-1)+5n$.
\end{theorem}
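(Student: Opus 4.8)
The plan is to prove the bound by bookkeeping the edges that Algorithm~\ref{algo:approximationalgorithmfor2ebc} puts into $E_{2e}$ in its three stages: the initial minimal $2$-edge connected set $U$, the edges inserted by the \textbf{while} loop of lines~8--13, and the edges inserted by the \textbf{for} loop of lines~15--20. The running invariant I would use is that $G_{2e}$ never loses an edge, so since $H=(V,U)$ is already $2$-edge connected, $G_{2e}$ stays $2$-edge connected throughout; hence $G_{2e}$ is always strongly connected, its underlying graph is always connected, and for every edge $t\in E_{2e}$ the digraph $G_{2e}\setminus\{t\}$ is still strongly connected. Consequently, at every stage the number of strongly biconnected components of $G_{2e}$, and of $G_{2e}\setminus\{t\}$, is at most $n$.

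First, since $H=(V,U)$ is a \emph{minimal} $2$-edge connected subgraph, the Edmonds--Mader bound quoted in the introduction gives $|U|\le 2\cdot 2\cdot n=4n$. Next, consider the \textbf{while} loop of lines~8--13: by Lemma~\ref{def:addingedgesfewersbcs}, each time an edge joining two distinct strongly biconnected components of $G_{2e}$ is added, the number of strongly biconnected components of $G_{2e}$ strictly decreases, and the loop terminates precisely when this number is $1$ (equivalently, when the underlying graph of $G_{2e}$ is biconnected). Starting from at most $n$ components, this loop therefore performs at most $n-1$ iterations and adds at most $n-1$ edges.

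For the \textbf{for} loop of lines~15--20 I would first observe that inserting an edge into a strongly biconnected digraph keeps it strongly biconnected, so inserting edges can only destroy b-bridges, never create them; hence the set of b-bridges computed in line~14 is contained in the set of b-bridges of $H$ and has size at most $i$. Fix one such b-bridge $t$. Since $G_{2e}\setminus\{t\}$ is strongly connected, Lemma~\ref{def:addingedgesfewersbcs} applies to it, so each iteration of the inner \textbf{while} loop of lines~16--20 strictly decreases the number of strongly biconnected components of $G_{2e}\setminus\{t\}$ until it equals $1$; this costs at most $n-1$ edges for $t$. Summing over the at most $i$ b-bridges, this stage adds at most $i(n-1)$ edges, and altogether $|E_{2e}|\le 4n+(n-1)+i(n-1)\le i(n-1)+5n$.

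The step I expect to be the main obstacle is not the arithmetic but verifying that the algorithm is well defined, i.e. that the edge searches in lines~11--12 and~18--19 always succeed, since the iteration counts above are only meaningful if those searches do not fail. For lines~11--12 this follows from $G$ being strongly biconnected. For lines~18--19 it needs the full hypothesis that $G$ is $2$-edge strongly biconnected: $G\setminus\{t\}$ is then strongly biconnected, so its underlying graph is biconnected while that of $G_{2e}\setminus\{t\}$ is not; picking a leaf block $B$ (with cut vertex $c$) of the block-cut tree of the underlying graph of $G_{2e}\setminus\{t\}$, there must be an edge of $G\setminus\{t\}$ from $B\setminus\{c\}$ to $V\setminus B$ — otherwise $c$ would be a cut vertex of $G\setminus\{t\}$ too — and any such edge lies in $E\setminus E_{2e}$ and joins two distinct strongly biconnected components of $G_{2e}\setminus\{t\}$. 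With this in hand the counting above yields the claimed bound.
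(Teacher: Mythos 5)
Your counting is essentially the paper's own argument: $|U|\le 4n$ for the minimal $2$-edge-connected subgraph via Edmonds--Mader, at most $n-1$ insertions in the loop of lines 8--13 because the number of strongly biconnected components is at most $n$ and strictly drops at each insertion (Lemma \ref{def:addingedgesfewersbcs}), and at most $n-1$ insertions per b-bridge in lines 15--20, giving $4n+(n-1)+i(n-1)\le i(n-1)+5n$. Your additional checks --- that $G_{2e}$ always contains the $2$-edge-connected set $U$, so $G_{2e}$ and $G_{2e}\setminus\{t\}$ stay strongly connected and Lemma \ref{def:addingedgesfewersbcs} applies, and that the edge searches in lines 11--12 and 18--19 cannot fail (the leaf-block argument using that $G$ minus any single edge is still strongly biconnected) --- are correct and fill in points the paper leaves implicit.

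The one step of yours that does not hold as stated is the claim that the b-bridges computed in line 14 form a subset of the b-bridges of $H$ and hence number at most $i$. Your monotonicity argument only covers edges of $U$: if $e\in U$ is not a b-bridge of $H$, then indeed $(V,E_{2e}\setminus\{e\})\supseteq(V,U\setminus\{e\})$ is strongly biconnected. But the edges inserted in lines 8--13 are not edges of $H$ at all, and they can be b-bridges of $G_{2e}$; in fact the last edge inserted in that loop always is one, since deleting it recreates the graph of the previous iteration, whose underlying graph was not biconnected. So the number of b-bridges iterated over in line 15 need not be bounded by the number of b-bridges of $H$ (a quantity that, strictly speaking, is not even well defined by the paper's definition in the else-branch, where $H$ is not strongly biconnected). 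The paper's proof silently treats $i$ as the number of b-bridges computed in line 14; under that reading your extra containment step is unnecessary and the rest of your argument coincides with the paper's, while under the literal reading of the statement neither your argument nor the paper's justifies that the for loop runs at most $i$ times.
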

\begin{proof}
	Results of Edmonds \cite{Edmonds72} and Mader \cite{Mader85} imply that the number of edges in $H$ is at most $4n$ \cite{CT00}. Moreover, by Lemma \ref{def:optsolution2esb}, any $2$-edge-strongly biconnected directed subgraph of a $2$-edge-strongly biconnected directed graph has at least $2n$ edges.
 Algorithm \ref{algo:approximationalgorithmfor2ebc} adds at most $n-1$ edge to $E_{2e}$ in while loop of lines $8$--$13$ since the number of strong biconnected components of any directed graph is at most $n$.	
For every b-bridge in line $16$, Algorithm \ref{algo:approximationalgorithmfor2ebc} adds at most $n-1$ edge to $E_{2e}$ in while loop. Consequently, $|E_{2e}|\leq i(n-1)+5n$.
\end{proof}
\begin{Theorem}
	Algorithm \ref{algo:approximationalgorithmfor2ebc} runs in $O(n^{2}m)$ time.
\end{Theorem}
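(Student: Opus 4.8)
The plan is to bound the running time of Algorithm~\ref{algo:approximationalgorithmfor2ebc} line by line and show that the \textbf{for} loop of lines 15--20 dominates everything else. Two preliminary observations make the accounting clean. First, every edge the algorithm inserts into $E_{2e}$ is drawn from $E$, so $E_{2e}\subseteq E$ and $|E_{2e}|\le m$ at every moment; together with $m\ge 2n$ (since $G$ is $2$-edge-strongly biconnected) and $m\le n(n-1)$ (assuming $G$ has no parallel edges), this lets us replace $O(n+m)$ by $O(m)$ and $O(m^2)$ by $O(n^2m)$ wherever convenient. Second, all connectivity primitives the algorithm invokes on a subgraph with at most $m$ edges --- testing strong connectivity, testing biconnectivity of the underlying graph, and computing strongly connected components, biconnected components, strong bridges, strongly biconnected components, and b-bridges --- run in $O(n+m)$ time by Tarjan's and Italiano et al.'s algorithms \cite{TAARJAN72,ILS12}; the b-bridges of a graph with at most $5n$ edges can alternatively be found in $O(n^2)$ time by testing the edges one at a time.

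Now I would go through the lines. Lines 1--2 compute a minimal $2$-edge-connected spanning subgraph $H$; doing this greedily, by repeatedly deleting an edge whose removal preserves $2$-edge-connectivity with each test costing $O(n+m)$, takes $O(m(n+m))=O(m^2)=O(n^2m)$ time, and afterwards $|U|\le 4n$ by the bound of Edmonds and Mader \cite{Edmonds72,Mader85,CT00}. Line 3 tests whether $H$ is $2$-edge strongly biconnected, which costs $O(n^2)$ since $|U|=O(n)$. The \textbf{while} loop of lines 8--13 runs at most $n-1$ times: by Lemma~\ref{def:addingedgesfewersbcs} each inserted edge strictly decreases the number of strongly biconnected components of $G_{2e}$, and that number is at most $n$; each iteration computes strongly biconnected components in $O(n+m)$ time and scans $E\setminus E_{2e}$ in $O(m)$ time, so the loop costs $O(n(n+m))$. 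Line 14 computes the b-bridges of $G_{2e}$, which now has fewer than $5n$ edges, in $O(n^2)$ time.

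The main cost is in lines 15--20. Since $|E_{2e}|<5n$ when line 14 is executed, $G_{2e}$ has at most $5n$ b-bridges, so the \textbf{for} loop iterates $O(n)$ times. For a fixed b-bridge $t$, the inner \textbf{while} loop of lines 16--20 terminates after at most $n-1$ iterations, again because by Lemma~\ref{def:addingedgesfewersbcs} each inserted edge strictly decreases the number of strongly biconnected components of $G_{2e}\setminus\{t\}$; each iteration computes strongly biconnected components of $G_{2e}\setminus\{t\}$ (which has at most $m$ edges) in $O(n+m)$ time and scans $E$ in $O(m)$ time. Hence each b-bridge contributes $O(n(n+m))$, and the \textbf{for} loop contributes $O(n^2(n+m))=O(n^2m)$, using $n\le m$. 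Adding the $O(n^2m)$ bounds from lines 1--2 and the smaller bounds from the remaining lines, the total running time is $O(n^2m)$.

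The step I expect to be the only genuinely delicate one is the size bookkeeping in the last paragraph: one must notice that before line 15 the set $E_{2e}$ still has only $O(n)$ edges, so there are only $O(n)$ b-bridges to loop over, and that $E_{2e}$ never leaves $E$, so every strongly-biconnected-component computation in the inner loop costs $O(n+m)$ rather than more. Dropping either observation inflates the estimate for lines 15--20 by a factor of $n$. Everything else is a routine combination of the linear-time connectivity subroutines with the termination bound supplied by Lemma~\ref{def:addingedgesfewersbcs}.
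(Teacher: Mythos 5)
Your proof is correct and takes essentially the same approach as the paper: the dominant cost is the loop over the $O(n)$ b-bridges, each handled by at most $n-1$ iterations (bounded via Lemma \ref{def:addingedgesfewersbcs}) of linear-time strongly biconnected component computations, giving $O(n^{2}m)$ overall. The only deviation is minor and harmless: for lines 1--2 the paper invokes the $O(n^{2})$-time procedure of Cheriyan and Thurimella for extracting a minimal $2$-edge-connected subgraph, whereas you use a cruder greedy-deletion bound of $O(m^{2})=O(n^{2}m)$, and your explicit bookkeeping ($|E_{2e}|<5n$ before line 14, hence $O(n)$ b-bridges; $E_{2e}\subseteq E$ throughout) just spells out what the paper leaves implicit.
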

\begin{proof}
	A minimal $2$-edge-connected subgraph can be found in time $O(n^2)$ \cite{CT00}.
	The strongly biconnected components of a directed graph can be found in linear time \cite{WG2010}. 
	Moreover, by Lemma \ref{def:addingedgesfewersbcs}, lines $16$--$20$ take $O(nm)$ time.
\end{proof}

\section{Open Problems}
 Results of Mader \cite{Mader71,Mader72} imply that the number of edges in each minimal $k$-vertex-connected undirected graph is at most $kn$ \cite{CT00}. Results of Edmonds \cite{Edmonds72} and Mader \cite{Mader85} imply that the number of edges in each minimal $k$-vertex-connected directed graph is at most $2kn$ \cite{CT00}. Moreover, Results of Edmonds \cite{Edmonds72} and Mader \cite{Mader85} imply that the number of edges in each minimal $k$-edge-connected directed graph is at most $2kn$ \cite{CT00}.  Jaberi proved that each minimal $2$-vertex-strongly biconnected directed graph has at most $7n$ edges. The proof in is based on results of Mader \cite{Mader71,Mader72,Mader85} and Edmonds \cite{Edmonds72}.
 
 We leave as open problem whether the number of edges in each minimal $2$-edge strongly biconnected directed graph is at most $7n$ edges.

 Cheriyan and Thurimella \cite{CT00} gave a $(1+\lfloor 4/\sqrt{k} \rfloor\rfloor)$-approximation algorithm for the minimum $k$-edge-connected spanning subgraph problem for directed. For $k=2$, the second step of this algorithm can be modified in order to obtain a feasible solution for the M2ESBSS problem. An open problem is whether this algorithm has a good approximation factor for the M2ESBSS problem.

\end{document}